\newcommand{\one}[0]{\mathds{1}}
\newtheorem{theorem}{Theorem}
\newtheorem{lemma}[theorem]{Lemma}
\begin{document}

 \title{The geometry  of non-additive stabiliser codes}
\author{Simeon Ball and Pablo Puig\thanks{23 July 2021.}} 
 \date{}
\maketitle
 
 \begin{abstract}
We present a geometric framework for constructing additive and non-additive stabiliser codes which encompasses stabiliser codes and graphical non-additive stabiliser codes. 
\end{abstract}

\section{Introduction}
Error-correction is an essential component in the construction of a fault-tolerant quantum circuit \cite{AB1997}. The most prevalent class of quantum codes are stabiliser codes, introduced in \cite{Gottesman1997} and \cite{CRSS1998}. An $[\![n,k,d]\!]$ stabiliser code encodes $k$ logical qubits on $n$ physical qubits in such a way that there is a recovery map which is able to correct all errors of weight at most $\lfloor (d-1)/2 \rfloor$.  Here, an error of weight $w$ is a Pauli operator acting on $({\mathbb C}^2)^{\otimes n}$ which has precisely $n-w$ components which are the identity map. An $[\![n,k,d]\!]$ stabiliser code $Q(S)$ is described by an abelian subgroup $S$ of the Pauli group of size $2^{n-k}$. The code $Q(S)$ has dimension $2^k$ and is the intersection of the eigenspaces of eigenvalue $1$ of the linear operators of $S$. More generally, a $(\!(n,K,d)\!)$ is a code of dimension $K$ which encodes on $n$ physical qubits and for which there is a recovery map which is able to correct all errors of weight at most $\lfloor (d-1)/2 \rfloor$. Therefore, a $[\![n,k,d]\!]$ stabiliser code is a  $(\!(n,2^k,d)\!)$ code. 

It is well-established that there are parameters for which one can find direct sums of stabiliser codes which are larger than the optimal stabiliser code with the same $n$ and $d$. These codes are called {\em non-additive} stabiliser codes, as opposed to stabiliser codes which are often referred to as {\em additive stabiliser} codes, since they are equivalent to certain classical additive binary codes. For example, as a stabiliser code the optimal $[\![5,k,2]\!]$ code is attained by the $4$-dimensional $[\![5,2,2]\!]$ code. However, as discovered in \cite{RHSS1997}, there is a $(\!(5,6,2)\!)$ which is the direct sum of six $[\![5,0,3]\!]$ stabiliser codes. A simple description of this code was given using graphs in \cite{YCLO2007}, which also contained a construction of a $(\!(9, 12, 3)\!)$ non-additive stabiliser code. A subset of the same authors then provided an example of a $(\!(10,24,3)\!)$ code in \cite{YCO2007}.
Apart from the graphical non-additive stabiliser codes, there are also examples of direct sums of stabiliser codes constructed by Grassl and R\"otteler from Goethals and Preparata codes, see \cite{GR2008} and \cite{GR2013}. The latter article also gives a description of graphical non-additive stabiliser codes.

The aim of this article is to give a general geometrical framework for all these constructions. We start by giving an algebraic description of non-additive stabiliser codes, which are the direct sum of stabiliser codes. We then translate this construction to projective geometry and prove that such a code is given by a set of lines $\mathcal X$ with a specific property, called a {\em quantum set of lines}, and a set of points with the property that any pair of the points projects $\mathcal X$ onto a set of lines. 

The finite field with $q$ elements will be denoted ${\mathbb F}_q$. We will use the notation $[n,k]_q$ code to describe a linear $k$-dimensional code over of length $n$ over ${\mathbb F}_q$, i.e. a $k$-dimensional subspace of the vector space ${\mathbb F}_q^n$.

\section{Direct sum of stabiliser codes}
The following theorem is from Nielsen and Chuang \cite[Theorem 10.1]{NC2011} and is due to Bennett, DiVincenzo, Smolin and Wootters \cite{BDSW1996} and Knill and Laflamme \cite{KL1997}.
 
\begin{theorem} \label{qecthm}
Let $Q$ be a quantum code, let $P$ be the projector onto $Q$ and let $\mathcal E$ be a quantum operation.  A necessary and sufficient condition for the existence of an error-correction operation $R$ correcting $\mathcal E$ on $Q$ is that, for all $E_i, E_j \in \mathcal E$,
$$
PE_i^{\dagger}E_jP=\alpha_{ij}P,
$$
for some Hermitian matrix $\alpha$ of complex numbers.
\end{theorem}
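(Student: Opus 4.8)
The plan is to prove the two implications separately. In both directions I will use two standard facts about operator-sum representations. First, the \emph{unitary freedom}: two lists of operation elements $\{E_i\}$ and $\{F_k\}$ describe the same quantum operation exactly when $F_k=\sum_i u_{ki}E_i$ for some unitary matrix $u$ (extending the shorter list by zero operators). Second, a characterisation of the identity: if a quantum operation with operation elements $\{A_k\}$ satisfies $\sum_k A_k\rho A_k^{\dagger}=\rho$ for every density operator $\rho$ supported on $Q$, then each $A_kP$ is a scalar multiple of $P$. The latter follows by testing on rank-one states $\ket{\psi}\bra{\psi}$ with $\ket{\psi}\in Q$: positivity forces each $A_k\ket{\psi}$ to be proportional to $\ket{\psi}$, with a proportionality constant independent of $\ket{\psi}$.

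For \textbf{necessity}, suppose a trace-preserving recovery operation $R$ with operation elements $\{R_j\}$ corrects $\mathcal E$ on $Q$. The composite $R\circ\mathcal E$ has operation elements $\{R_jE_i\}$ and acts as the identity on $Q$, so by the fact above $R_jE_iP=c_{ji}P$ for scalars $c_{ji}$; taking adjoints gives $PE_i^{\dagger}R_j^{\dagger}=\overline{c_{ji}}P$. Using completeness $\sum_jR_j^{\dagger}R_j=I$ we obtain
$$
PE_i^{\dagger}E_lP=PE_i^{\dagger}\Big(\sum_jR_j^{\dagger}R_j\Big)E_lP=\sum_j\big(PE_i^{\dagger}R_j^{\dagger}\big)\big(R_jE_lP\big)=\Big(\sum_j\overline{c_{ji}}\,c_{jl}\Big)P,
$$
so the stated condition holds with $\alpha_{il}=\sum_j\overline{c_{ji}}\,c_{jl}$, that is $\alpha=C^{\dagger}C$ for $C=(c_{ji})$; in particular $\alpha$ is Hermitian (indeed positive semidefinite).

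For \textbf{sufficiency}, assume $PE_i^{\dagger}E_jP=\alpha_{ij}P$ with $\alpha$ Hermitian. Diagonalise $\alpha=udu^{\dagger}$ with $u$ unitary and $d$ real diagonal, and pass to the equivalent operation elements $F_k=\sum_iu_{ik}E_i$; a short computation gives $PF_k^{\dagger}F_lP=d_{kl}P$, which is diagonal. Hence, for the indices $k$ with $d_{kk}\neq 0$, the subspaces $F_kQ$ are pairwise orthogonal and $F_k/\sqrt{d_{kk}}$ restricts to a partial isometry of $Q$ onto $F_kQ$ (while $F_kQ=\{0\}$ when $d_{kk}=0$). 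Let $P_k$ be the orthogonal projector onto $F_kQ$, and choose unitaries $U_k$ whose restriction inverts the corresponding partial isometry, mapping $F_kQ$ back onto $Q$. Define the recovery operation $R$ by the operation elements $\{U_kP_k\}$, together with one further element projecting onto the orthogonal complement of $\bigoplus_kF_kQ$. Then for $\ket{\psi}\in Q$ the cross terms vanish, $P_kF_l=0$ for $k\neq l$, and each surviving term $U_kP_kF_k$ acts on $Q$ as a multiple of the identity; summing and normalising shows $R(\mathcal E(\ket{\psi}\bra{\psi}))=\ket{\psi}\bra{\psi}$, so $R$ corrects $\mathcal E$ on $Q$.

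The conceptual content is entirely in the sufficiency direction: once $\alpha$ is diagonalised, orthogonality of the error images $F_kQ$ and the partial-isometry structure are immediate, and the recovery is simply ``measure which image you are in, then undo it''. The main obstacle is the bookkeeping around degeneracies---operation elements $F_k$ that annihilate $Q$, completing the partial isometries to honest unitaries $U_k$ on the full space, and verifying the completeness inequality $\sum_k(U_kP_k)^{\dagger}(U_kP_k)\le I$ so that $R$ is a legitimate quantum operation---together with confirming that the normalisation in the last step is exactly $1$, which is where trace preservation of $\mathcal E$ on $Q$ is used. I will carry out these verifications but suppress the routine computations.
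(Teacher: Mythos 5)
The paper offers no proof of this statement --- it is quoted directly from Nielsen and Chuang \cite[Theorem 10.1]{NC2011} --- and your argument is precisely the standard proof given there: necessity from $R_jE_iP=c_{ji}P$ (which follows from the rank-one/positivity argument you sketch) combined with completeness of the recovery elements, and sufficiency by diagonalising $\alpha$, observing that the images $F_kQ$ are mutually orthogonal with $F_k/\sqrt{d_{kk}}$ a partial isometry, and recovering with the inverse isometries plus a projector onto the complement. The proof is correct; the only points to tighten in a written version are that ``$R$ corrects $\mathcal E$ on $Q$'' means $(R\circ\mathcal E)(\rho)\propto\rho$ when $\mathcal E$ is not trace-preserving (which changes nothing in either direction), and that the vanishing of cross terms should be stated as $P_kF_lP=0$ for $k\neq l$ rather than $P_kF_l=0$.
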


Recall that the Pauli matrices are
$$
\one=\left(\begin{array}{cc} 1 & 0 \\ 0 & 1 \end{array}\right), \ X=\left(\begin{array}{cc} 0 & 1 \\ 1 & 0 \end{array}\right), \ Z=\left(\begin{array}{cc} 1 & 0 \\ 0 & -1 \end{array}\right), \ Y=\left(\begin{array}{cc} 0 & -i \\ i & 0 \end{array}\right).
$$

Let 
$$
\mathcal P_n=\{ c \sigma_1 \otimes \cdots \otimes \sigma_n \ | \ \sigma_i \in \{ \one, X, Y, Z \}, \ c^4=1 \}
$$ denote the group of Pauli operators on $({\mathbb C}^2)^{\otimes n}$. 

The {\em weight} $\mathrm{wt}(E)$ of $E \in \mathcal P_n$ is the number of non-identity operators in its tensor product.

We will be interested in constructing codes which can correct all errors in 
$$
\mathcal E_d=\{ E_i \in \mathcal P_n \ | \ \mathrm{wt}(E_i) \leqslant \lfloor (d-1)/2 \rfloor \}.
$$
By discretisation of errors, \cite[Theorem 10.2]{NC2011}, this allows such a code to correct any linear combination of the errors in $\mathcal E_d$.

Let $S$ be an abelian subgroup of $\mathcal P_n$ of size $2^{n-k}$. The additive stabiliser code $Q(S)$ is defined to be the intersection of the eigenspaces of eigenvalue $1$ of the elements of $S$. 
We will implicitly assume throughout that $S$ does not contain $-\one$ (so that $Q(S)$ is non-trivial).

\begin{theorem} \label{mindiststab}
Suppose $k\neq 0$. If $d$ is the minimum weight of $\mathrm{Centraliser}(S) \setminus S$ and we encode with $Q(S)$ then there is a recovery map which corrects all errors in $\mathcal E_d$.
\end{theorem}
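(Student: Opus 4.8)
The plan is to apply Theorem~\ref{qecthm} directly to the projector $P$ onto $Q(S)$ with the error set $\mathcal E = \mathcal E_d$, so the whole task reduces to verifying the condition $PE_i^\dagger E_j P = \alpha_{ij}P$ for all Pauli operators $E_i,E_j$ of weight at most $\lfloor(d-1)/2\rfloor$. First I would record a convenient formula for $P$: since $S$ is an abelian subgroup of $\mathcal P_n$ of size $2^{n-k}$ not containing $-\one$, one has $P = 2^{-(n-k)}\sum_{M\in S} M$, and this is a genuine (self-adjoint, idempotent) projector onto the simultaneous $+1$ eigenspace. The key algebraic fact I would use is that for any $g\in\mathcal P_n$, conjugation/multiplication interacts with $S$ in one of two ways: either $g$ commutes with every element of $S$ (i.e. $g\in\mathrm{Centraliser}(S)$), or there is some $M\in S$ with $gM = -Mg$, and in the latter case a standard averaging argument gives $gP = \tfrac12 g(P + MPM^{-1})$ type cancellation, yielding $gP = Pg = \dots$ Let me state it cleanly: $PgP = 0$ whenever $g\notin \mathrm{Centraliser}(S)$, while $PgP = \pm P$ (a scalar times $P$, with the scalar a fourth root of unity) whenever $g\in\mathrm{Centraliser}(S)$. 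The first claim follows because if $gM=-Mg$ then $PgP = PgMM^{-1}P = P M (-g) M^{-1} P = -PgP$ (using $MP = PM = P$), forcing $PgP=0$; the second because if $g$ centralises $S$ then $Pg = gP$, so $PgP = gP^2 = gP$, and I would separately argue $gP$ is a scalar multiple of $P$ — if $g\in S$ this is immediate with scalar $1$, and in general one uses that $g$ normalises $S$ (the centraliser equals the normaliser here since $\mathcal P_n/\langle c\one\rangle$ is elementary abelian) together with the structure of $Q(S)$ as an irreducible-type module, or more simply one only needs $PgP\in\{0\}\cup\{\zeta P\}$ which can be checked componentwise.

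With that dichotomy in hand, the argument runs as follows. Given $E_i, E_j\in\mathcal E_d$, the product $E_i^\dagger E_j$ is again an element of $\mathcal P_n$ of weight at most $\mathrm{wt}(E_i)+\mathrm{wt}(E_j)\leqslant 2\lfloor(d-1)/2\rfloor \leqslant d-1 < d$. Now I split into cases on whether $g := E_i^\dagger E_j$ lies in $\mathrm{Centraliser}(S)$. If $g\notin\mathrm{Centraliser}(S)$, the first part of the dichotomy gives $PE_i^\dagger E_j P = 0 = 0\cdot P$, which is of the required form with $\alpha_{ij}=0$. If $g\in\mathrm{Centraliser}(S)$: since $\mathrm{wt}(g) < d$ and $d$ is by hypothesis the minimum weight of $\mathrm{Centraliser}(S)\setminus S$, the only way $g$ can have weight strictly less than $d$ is if $g\in S$ (up to the scalar $c$ — one has to be slightly careful that $g = cM$ for $M\in S$ and $c^4=1$, and handle the weight-$0$ case where $g$ is itself scalar). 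Then $gP = cMP = cP$, so $PE_i^\dagger E_j P = c\,P = \alpha_{ij}P$ with $\alpha_{ij} = c$ a scalar. In either case the Knill--Laflamme condition holds; the Hermiticity of the matrix $\alpha = (\alpha_{ij})$ follows automatically since $PE_j^\dagger E_i P = (PE_i^\dagger E_j P)^\dagger = \overline{\alpha_{ij}}P$. Theorem~\ref{qecthm} then produces the recovery map.

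The main obstacle I anticipate is not the high-level structure — which is the classical derivation that a stabiliser code has distance equal to the minimum weight of $\mathrm{Centraliser}(S)\setminus S$ — but rather the bookkeeping around the scalars $c$ with $c^4=1$ in the definition of $\mathcal P_n$. Concretely, I need to be careful that an element $g = E_i^\dagger E_j$ of the Pauli group whose image in $\mathcal P_n/\langle i\one\rangle$ lies in (the image of) $S$ actually satisfies $gP = cP$ for a \emph{scalar} $c$ and not something worse, and that the weight-zero case ($g$ a pure phase) is correctly absorbed — there $g = c\one$ and $gP = cP$ trivially. I would also make explicit the one structural input that $\mathrm{Centraliser}(S) = \mathrm{Normaliser}(S)$ in $\mathcal P_n$, which holds because commutators in $\mathcal P_n$ are central ($\pm\one$), so that "$g$ commutes with $S$" is the right condition to phrase the dichotomy around; alternatively one can bypass normaliser language entirely and argue purely with the explicit formula $P = 2^{-(n-k)}\sum_{M\in S}M$ and the commutation relations of Pauli operators, which is probably the cleanest route and the one I would ultimately write up.
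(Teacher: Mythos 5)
Your proposal is correct and follows essentially the same route as the paper: reduce to the Knill--Laflamme condition of Theorem~\ref{qecthm}, note that $E_i^\dagger E_j$ has weight at most $d-1$ and hence is either outside $\mathrm{Centraliser}(S)$ (where an anticommuting $M\in S$ forces $PEP=-PEP=0$) or lies in $S$ (where $PEP=P$). The paper works with $P=\sum_i\ketbra{\psi_i}$ rather than $P=2^{-(n-k)}\sum_{M\in S}M$ and silently absorbs the phase bookkeeping you flag, but these are presentational differences only.
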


\begin{proof}
Suppose $E_i,E_j \in \mathcal E$. Then $E=E_iE_j$ has weight at most $d-1$. This implies that 
$$
E\not\in \mathrm{Centraliser}(S) \setminus S
$$
since the elements of $\mathrm{Centraliser}(S) \setminus S$ have weight at least $d$. 

Thus, either $E \not\in \mathrm{Centraliser}(S)$ or $E \in S$.

The projector onto $Q(S)$ is
$$
P=\sum_{i=1}^{2^k} \ketbra{\psi_i},
$$
where $\{  \ket{\psi_i} \ | \ i=1,\ldots,2^k \}$ is an orthonormal basis for $Q(S)$.

If $E \not\in \mathrm{Centraliser}(S)$ then there is an element $M\in S$ such that $ME=-EM$ and
$$
PE_iE_jP=PEP=\sum_{r,s=1}^{2^k} \ketbra{\psi_r} E  \ketbra{\psi_s}
$$
$$
=\sum_{r,s=1}^{2^k}   \ketbra{\psi_r} E M\ketbra{\psi_s}
=-\sum_{r,s=1}^{2^k} \ketbra{\psi_r} ME \ketbra{\psi_s}=-PEP
$$
from which it follows that $PEP=0$.

If $E \in S$ then
$$
PE_iE_jP=PEP=\sum_{r,s=1}^{2^k} \ketbra{\psi_r} E  \ketbra{\psi_s}=\sum_{r,s=1}^{2^k}\ketbra{\psi_r}\!\!\!\ketbra{\psi_s}=P.
$$
Hence, Theorem~\ref{qecthm} implies there is a recovery map.
\end{proof}

In light of Theorem~\ref{mindiststab}, if $k\neq 0$ then one defines the minimum distance $d$ of $Q(S)$ to be the minimum weight of the elements of $\mathrm{Centraliser}(S) \setminus S$. If $k=0$ then we define the minimum distance $d$ of $Q(S)$ to be the minimum weight of the elements of $S$. 
If $d$ is the minimum weight of the elements of $\mathrm{Centraliser}(S)$ then the code is said to be {\em pure} and {\em impure} if not.

Suppose that $\{M_1,\ldots,M_{n-k}\}$ is a set of generators for $S$. We construct a binary $(n-k) \times 2n$ matrix, whose $j$-th row is obtained from the generator $M_j$ in the following way. If the $i$-th component of $M_j$ is $\one, X, Z, Y$ then the $(i,i+n)$ coordinates of the $j$-th row are $(0,0), (1,0), (0,1), (1,1)$ respectively. We denote this map by $\tau$, so the $j$-th row of $\mathrm{G}$ is $\tau(M_j)$. Let $C$ be the corresponding binary linear code with parameters $[2n,n-k]$ which has a generator matrix $G$. The fact that $S$ is abelian is equivalent to the property that for any two elements $u,v \in C$,
\begin{equation} \label{perp}
(u,v)=\sum_{i=1}^n (u_i v_{i+n}-v_i u_{i+n})=0. 
\end{equation}
This can be checked directly by observing that the only pairs of Pauli's that do not commute are $\{X,Y\}$, $\{X,Z\}$ and $\{Y,Z\}$ and that the only pairs $\{(u_i,u_{i+n}),(v_i, v_{i+n})\}$ that contribute a ``$1$'' to the sum are $\{(1,0),(1,1)\}$, $\{(1,0),(0,1)\}$ and $\{(1,1),(0,1)\}$.

If we define
$$
C^{\perp_s}=\{ v \in {\mathbb F}_2^{2n} \  | \ (u,v)=0,\ \mathrm{for} \ \mathrm{all} \ u\in C\}
$$
then the condition on $C$, so that $S$ is abelian, is that $C  \leqslant C^{\perp_s}$.

Let $T \subseteq {\mathbb F}_2^{n-k}$ and define, for $t=(t_1,\ldots,t_{n-k}) \in T$,
$$
Q_t(S)
$$
as the intersection of the eigenspaces of eigenvalue $1$ of $(-1)^{t_i}M_i$, for all $i \in \{1,\ldots,n-k\}$, and
$$
Q(S,T)=\bigoplus_{t \in T} Q_t(S).
$$

Let $t,u \in T \setminus \{0\}$ and let $A_{t,u}$ be a $(n-k) \times (n-k)$ non-singular matrix whose first two columns are $t$ and $u$. Then $A_{t,u}^{-1}G$ is also a generator matrix for $C$ and we can find another set 
$$
\{M_i' \ | i=1,\ldots,n-k \}
$$
of generators of $S$, where $M_i'$ is obtained from the $i$-th row of $A_{t,u}^{-1}G$ by applying $\tau^{-1}$, in other words reversing the construction above. 

We define $S_{t,u}$ as the subgroup of $S$ generated by $M'_3,\ldots,M'_{n-k}$. 

\begin{lemma} \label{stulemma}
Suppose $\ket{\psi^t} \in Q_t(S)$ and $\ket{\psi^{u}} \in Q_{u}(S)$. Then, for all $M\in S_{t,u}$, 
$$
M\ket{\psi^t}=\ket{\psi^t} \mathrm{and} \ M \ket{\psi^{u}}=\ket{\psi^{u}}.
$$
\end{lemma}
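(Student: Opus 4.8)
The plan is to reduce the claim to the generators $M_3',\dots,M_{n-k}'$ of $S_{t,u}$ and then evaluate each one on $\ket{\psi^t}$ and $\ket{\psi^{u}}$ via the change of basis $A_{t,u}^{-1}$. Write $B=A_{t,u}^{-1}$, a non-singular $(n-k)\times(n-k)$ matrix over ${\mathbb F}_2$. Since $\ket{\psi^t}\in Q_t(S)$ we have $(-1)^{t_\ell}M_\ell\ket{\psi^t}=\ket{\psi^t}$, that is $M_\ell\ket{\psi^t}=(-1)^{t_\ell}\ket{\psi^t}$ for every $\ell$, and likewise $M_\ell\ket{\psi^{u}}=(-1)^{u_\ell}\ket{\psi^{u}}$.

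Next I would identify the new generators explicitly in terms of the old ones. The $i$-th row of $BG$ is $\sum_{\ell}B_{i\ell}\tau(M_\ell)$, and since $\tau$ converts a product of Pauli operators into the sum modulo $2$ of their images, applying $\tau^{-1}$ to this row yields an operator which agrees with $\prod_{\ell}M_\ell^{B_{i\ell}}$ up to a phase (the product is well defined since the $M_\ell$ commute). Both $M_i'$ and this product lie in $S$, and $S$ contains no $-\one$ by the standing assumption, so the phase is trivial and $M_i'=\prod_{\ell}M_\ell^{B_{i\ell}}$. This is the one place where a little care is needed; the rest is linear algebra over ${\mathbb F}_2$.

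Now for $3\leqslant i\leqslant n-k$,
$$
M_i'\ket{\psi^t}=\prod_{\ell}M_\ell^{B_{i\ell}}\ket{\psi^t}=(-1)^{\sum_{\ell}B_{i\ell}t_\ell}\ket{\psi^t}=(-1)^{(Bt)_i}\ket{\psi^t}.
$$
Because $t$ is the first column of $A_{t,u}$ we have $A_{t,u}^{-1}t=e_1$, so $(Bt)_i=0$ for $i\geqslant 2$, and in particular $M_i'\ket{\psi^t}=\ket{\psi^t}$ for $i\geqslant 3$. Replacing $t$ by $u$ and using that $u$ is the second column of $A_{t,u}$ gives $A_{t,u}^{-1}u=e_2$, hence $M_i'\ket{\psi^{u}}=\ket{\psi^{u}}$ for $i\geqslant 3$.

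Finally, every element of $S_{t,u}$ is a product of the generators $M_3',\dots,M_{n-k}'$ and their inverses, and the property of fixing a given vector is preserved under taking products and inverses, so every $M\in S_{t,u}$ satisfies $M\ket{\psi^t}=\ket{\psi^t}$ and $M\ket{\psi^{u}}=\ket{\psi^{u}}$. The main obstacle, such as it is, is the phase bookkeeping in passing from the symplectic picture back to Pauli operators; once that is settled the computation is a one-line consequence of $A_{t,u}^{-1}t=e_1$ and $A_{t,u}^{-1}u=e_2$.
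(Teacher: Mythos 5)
Your proof is correct and takes essentially the same route as the paper's: change the generating set via $A_{t,u}^{-1}$ so that $t$ and $u$ become $e_1$ and $e_2$, whence the generators $M_3',\dots,M_{n-k}'$ of $S_{t,u}$ act as the identity on both $Q_t(S)$ and $Q_u(S)$. You merely make explicit the computation $M_i'=\prod_\ell M_\ell^{B_{i\ell}}$ and the phase bookkeeping that the paper leaves implicit.
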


\begin{proof}
Observe that $Q_t(S)$ depends on the set of generators we have chosen for $S$. If we use the set of generators $M'_1,\ldots,M'_{n-k}$ for $S$ then $Q_t(S)$ becomes $Q_{(1,0,0,\ldots,0)}(S)$ and $Q_u(S)$ becomes $Q_{(0,1,0,\ldots,0)}(S)$. Thus, $M'_j \ket{\psi^t}=\ket{\psi^t}$ and $M'_j \ket{\psi^{u}}=\ket{\psi^{u}}$ for all $j\in \{3,\ldots,n-k\}$.
\end{proof}.

\begin{theorem} \label{mindistsum}
Let $T \subset {\mathbb F}_2^{n-k}$. If $d$ is the minimum weight of $\mathrm{Centraliser}(S_{t,u})$, where the minimum is taken over all pairs $(t,u)$ of non-zero elements of $T$, and we encode with $Q(S,T)$ then there is a recovery map which corrects all errors in $\mathcal E_d$.
\end{theorem}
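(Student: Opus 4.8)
The plan is to verify the Knill--Laflamme conditions of Theorem~\ref{qecthm} for the orthogonal projector $P$ onto $Q(S,T)$. Writing $P_t$ for the projector onto $Q_t(S)$, observe that for $t\neq u$ there is a generator $M_i$ with $t_i\neq u_i$, so $Q_t(S)$ and $Q_u(S)$ lie in the eigenspaces of the Hermitian operator $M_i$ for the distinct eigenvalues $(-1)^{t_i}$ and $(-1)^{u_i}$; hence the spaces $Q_t(S)$ are pairwise orthogonal and $P=\sum_{t\in T}P_t$. Given $E_i,E_j\in\mathcal E_d$, put $E=E_i^{\dagger}E_j\in\mathcal P_n$, so $\mathrm{wt}(E)\leqslant 2\lfloor (d-1)/2\rfloor\leqslant d-1$. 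Since $PEP=\sum_{t,u\in T}P_tEP_u$, it is enough to evaluate each block $P_tEP_u$, and I will show that $PEP$ is always a scalar multiple of $P$. (We may assume $T\setminus\{0\}$ has at least two elements, as otherwise the hypothesis of the theorem is vacuous.)

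Split into two cases. If $E$ is a scalar, say $E=c\one$, then $PEP=cP$ and we are done. Assume then that $E$ is non-scalar. The key point is that $E\notin\mathrm{Centraliser}(S_{t,u})$ for every pair $(t,u)$ of distinct non-zero elements of $T$: by hypothesis every element of positive weight in $\mathrm{Centraliser}(S_{t,u})$ has weight at least $d$, while $E$ is non-scalar of weight at most $d-1$. So there is $M\in S_{t,u}$ with $ME=-EM$. By Lemma~\ref{stulemma}, $M$ fixes $Q_t(S)$ and $Q_u(S)$ pointwise, and since $M$ is a Hermitian Pauli operator this gives $MP_t=P_tM=P_t$ and $MP_u=P_uM=P_u$. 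Therefore
\[
P_tEP_u=(P_tM)EP_u=P_t(ME)P_u=-P_t(EM)P_u=-P_tE(MP_u)=-P_tEP_u,
\]
and so $P_tEP_u=0$. A diagonal block $P_tEP_t$ with $t\neq 0$ is handled by the same computation after choosing any second non-zero $u\in T\setminus\{t\}$: the resulting $M\in S_{t,u}$ satisfies $MP_t=P_tM=P_t$, whence $P_tEP_t=-P_tEP_t=0$. If $0\in T$, the blocks involving $Q_0(S)=Q(S)$ are handled identically: since $\mathrm{Centraliser}(S)\subseteq\mathrm{Centraliser}(S_{t,u})$ we get $E\notin\mathrm{Centraliser}(S)$, any $M\in S$ fixes $Q(S)$ pointwise so $P_0EP_0=0$, and for $P_0EP_u$ one takes $M$ anticommuting with $E$ in the subgroup $\{M\in S: M|_{Q_u(S)}=\mathrm{id}\}\supseteq S_{t,u}$, whose centraliser is again contained in $\mathrm{Centraliser}(S_{t,u})$. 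Hence $PEP=0$ whenever $E$ is non-scalar. In both cases $PEP=\alpha_{ij}P$, and the matrix $\alpha=(\alpha_{ij})$ is Hermitian since $\overline{\alpha_{ij}}P=(PE_i^{\dagger}E_jP)^{\dagger}=PE_j^{\dagger}E_iP=\alpha_{ji}P$; Theorem~\ref{qecthm} then yields the recovery map.

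The one genuinely delicate point is the uniformity of $\alpha$ across the summands $Q_t(S)$: a priori $E$ might act as a non-trivial scalar on one summand and differently on another, which would break the condition. The resolution is the dichotomy forced by $\mathrm{wt}(E)<d$ — either $E$ is literally a scalar, and so the same on every summand, or it anticommutes with an element of $S_{t,u}$ fixing $Q_t(S)$ pointwise, and so annihilates that block. Extracting this dichotomy is exactly where the bound on the minimum weight of $\mathrm{Centraliser}(S_{t,u})$ and Lemma~\ref{stulemma} are used; the rest is the orthogonality-and-anticommutation bookkeeping already seen in the proof of Theorem~\ref{mindiststab}.
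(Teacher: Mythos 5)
Your proof is correct and follows the same route as the paper: verify the Knill--Laflamme condition of Theorem~\ref{qecthm} by producing, for each block, an element of $S_{t,u}$ that anticommutes with $E$ and fixes the relevant summands pointwise via Lemma~\ref{stulemma}. You are in fact more careful than the paper's own proof, which sums over all $t,u\in T$ (including $t=u$ and $t=0$, where $S_{t,u}$ is not defined) and does not address the possibility that $E=E_i^{\dagger}E_j$ is a scalar; your explicit treatment of the diagonal blocks, the blocks involving $0\in T$, and the scalar case closes these gaps.
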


\begin{proof}
The projector onto $Q(S,T)$ is
$$
P=\sum_{t \in T} \sum_{i=1}^{2^k} \ketbra{\psi^t_i}
$$
where $\{  \ket{\psi^t_i} \ | \ i=1,\ldots,2^k \}$ is an orthonormal basis for $Q_t(S)$.

Suppose $E_i,E_j \in \mathcal E$. Then $E=E_iE_j$ has weight at most $d-1$. This implies that 
$$
E \not\in \mathrm{Centraliser}(S_{t,u}) 
$$
for any $t,u \in T$, since the elements of $\mathrm{Centraliser}(S_{t,u})$ have weight at least $d$. 

Thus, since the elements in $\mathcal P_n$ either commute or anti-commute, there is an element $M_{t,u} \in S_{t,u}$ such that $M_{t,u}E =-E M_{t,u}$. 

By Lemma~\ref{stulemma},
$$
M_{t,u}\ket{\psi^t_r}=\ket{\psi^t_r} \mathrm{and} \ M_{t,u} \ket{\psi^{u}_s}=\ket{\psi^{u}_s},
$$
for all $r,s \in \{1,\ldots,2^k\}$.

Hence,
$$
PEP=\sum_{t,u \in T} \sum_{r,s=1}^{2^k} \ketbra{\psi^t_r} E  \ketbra{\psi^{u}_s}
$$
$$
=\sum_{t,u \in T} \sum_{r,s=1}^{2^k} \ketbra{\psi^t_r}E M_{t,u}\ketbra{\psi^{u}_s}
=-\sum_{t,u \in T} \sum_{r,s=1}^{2^k} \ketbra{\psi^t_r} M_{t,u}E \ketbra{\psi^{u}_s}=-PEP
$$
from which it follows that $PEP=0$. Theorem~\ref{qecthm} implies there exists a recovery map.
\end{proof}

In light of Theorem~\ref{mindistsum}, we conclude that the minimum distance of $Q(S,T)$ is at least the minimum of the minimum weight of the elements of $\mathrm{Centraliser}(S_{t,u})$ as $t$ and $u$ run over all pairs of non-zero elements of $T$.

Note the difference between Theorem~\ref{mindiststab} and Theorem~\ref{mindistsum}. In the latter case there are no errors which act trivially on the code space. This is due to the fact that for distinct $u,t \in T$ there is a $j$ for which $u_j \neq t_j$ and for this $j$, $M_j\ket{\psi^u} \neq M_j\ket{\psi^t}$.

\section{The geometry of a direct sum of stabiliser codes}

Let PG$(k-1,2)$ denote the $(k-1)$-dimensional projective geometry over ${\mathbb F}_2$,  the field of two elements. This geometry consists of points which are the non-zero vectors of ${\mathbb F}_2^k$ and lines, which are three points $\{u,v,u+v\}$, and higher dimensional subspaces, which are obtained from subspaces of ${\mathbb F}_2^k$ by removing the zero vector.

If $\mathrm{Centraliser}(S)$ does not have any elements of weight one then an additive $[\![n,k,d]\!]$ stabiliser code $Q(S)$ is entirely equivalent to a set $\mathcal X$ of $n$ lines in PG$(n-k-1,2)$ with the property that any co-dimension two subspace is skew to (does not intersect) an even number of the lines in $\mathcal X$, see \cite{GGMG} and \cite[Lemma 3.6]{BCH2021}. 

A set of generators of the abelian subgroup $S$ can be obtained from $\mathcal X$ by constructing a $(n-k) \times 2n$ matrix $\mathrm{G}$, whose $i$-th and $(i+n)$-th column is a basis for the $i$-th line of $\mathcal X$, where $i \in \{1,\ldots,n\}$.  
Recall from the previous section that we defined a map $\tau$  so that the $j$-th row of $\mathrm{G}$ is $\tau(M_j)$, where $M_1,\ldots,M_{n-k}$ is a set of generators for $S$. The code generated by $\mathrm{G}$ will be denoted by $C=\tau(S)$. If $Q(S)$ is pure then the minimum distance $d$ can be obtained from the geometry as the size of the minimum set of dependent points on distinct lines of $\mathcal X$, see \cite{GGMG} or \cite{BCH2021}. For the sake of completeness, observe that $C^{\perp_s}=\tau(\mathrm{Centraliser}(S))$. The symplectic weight of an element $v \in {\mathbb F}_2^{2n}$ is the size of the support
$$
\mathrm{Support}(v)= \{ i \in \{1,\ldots,n \} \ | \ (v_i,v_{i+n}) \neq (0,0)\}.
$$

Since an element of $v\in C^{\perp}_s$ is symplectically orthogonal to all the rows of $\mathrm{G}$, an element of symplectic weight $w$ will give a dependence of $w$ points on the $w$ lines of $\mathcal X$ corresponding to the elements of $\mathrm{Support}(v)$.

In the case of impure codes we have to discount the dependencies in which the lines of $\mathcal X$ which do not contain dependent points are contained in a hyperplane (a co-dimension one subspace), which also contains the dependent points, see \cite{BCH2021}. However, for the purposes of this article, Theorem~\ref{mindistsum} bounds the minimum distance of $Q(S,T)$ below by the minimum of $\mathrm{Centraliser}(S_{t,u})$. This is obtained geometrically as the size of the minimum set of dependent points on distinct lines of $\mathcal X_{t,u}$, obtained from the subgroup $S_{t,u}$.

Let $T$ be a subset of ${\mathbb F}_2^{n-k}$. Let $t,u$ be distinct non-zero elements of $T$ and let $A_{t,u}$ be a $(n-k) \times (n-k)$ non-singular matrix whose first two columns are $t$ and $u$. In the previous section we noted that $A_{t,u}^{-1}G$ is another generator matrix for $C$ and that we can find another set 
$$
\{M_i' \ | i=1,\ldots,n-k \}
$$
of generators of $S$, where $M_i'$ is obtained from the $i$-th row of $A_{t,u}^{-1}G$. We then defined $S_{t,u}$ as the subgroup of $S$ generated by $M'_3,\ldots,M'_{n-k}$. As above, let $\mathcal X_{t,u}$ be the quantum set of lines of PG$(n-k-3,2)$ we obtain from the subgroup $S_{t,u}$. The geometric path from $\mathcal X$ to $\mathcal X_{s,t}$ is to project the set of lines $\mathcal X$ to a set of lines in PG$(n-k-3,2)$ from the points $t$ and $u$. Recall that to project from the $i$-th canonical basis element we simply delete the $i$-th coordinate. Therefore, after changing the basis with $A_{t,u}$, we project from $t$ and $u$ by deleting the first two coordinates. In the subgroup setting this is equivalent to removing $M_1'$ and $M_2'$ from the set of generators.

Thus, the code $Q(S,T)$ is equivalent to a set $\mathcal X$ of $n$ lines in PG$(n-k-1,2)$ with the property that any co-dimension two subspace is skew to an even number of the lines of $\mathcal X$, together with a set of points $T \setminus \{0\}$ whose pairs project the lines of $\mathcal X$ onto a set of lines in PG$(n-k-3,2)$. Since every co-dimension two subspace of PG$(n-k-1,2)$ is skew to an even number of the lines of $\mathcal X$, it is immediate that in the projection this property holds too. Thus, the projection from $t$ and $u$ of $\mathcal X$ is onto a quantum set of lines $\mathcal X_{t,u}$ in PG$(n-k-3,2)$ which gives the subgroup $S_{t,u}$, by the construction described above. Therefore $d(\mathcal X_{t,u})$, the size of the minimum set of dependent points on distinct lines of $\mathcal X_{t,u}$, as $t$ and $u$ vary over all pairs of non-zero elements of $T$, is a lower bound for the minimum distance of $Q(S,T)$. We have proved the following theorem.

\begin{theorem} \label{qstthm}
Let $T \subset {\mathbb F}_2^{n-k}$. Let $\mathcal X$ be the quantum set of lines given by the abelian subgroup $S$. The code $Q(S,T)$ is a $(\!(n,|T|2^k,d)\!)$ code, where 
$$
d \geqslant \min_{t,u} d(\mathcal X_{t,u}),
$$
as $t$ and $u$ vary over all pairs of non-zero elements of $T$.
\end{theorem}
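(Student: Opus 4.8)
The plan is to establish the two assertions of the theorem — the dimension of $Q(S,T)$ and the lower bound on its minimum distance — separately, and then to note that everything has in fact already been assembled in the discussion preceding the statement. For the dimension, I would first check that the summands $Q_t(S)$, $t\in T$, are pairwise orthogonal: if $t\neq u$ then $t_j\neq u_j$ for some $j$, so $Q_t(S)$ and $Q_u(S)$ lie in the $+1$- and $-1$-eigenspaces of the generator $M_j$ and hence are orthogonal. Each $Q_t(S)$ is the additive stabiliser code of the abelian subgroup generated by $(-1)^{t_1}M_1,\ldots,(-1)^{t_{n-k}}M_{n-k}$, so has dimension $2^k$; summing over $t$ gives $\dim Q(S,T)=|T|2^k$, and the code acts on $n$ physical qubits since $S\leqslant\mathcal P_n$.

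For the distance I would invoke Theorem~\ref{mindistsum}: it guarantees a recovery map for all errors in $\mathcal E_d$ whenever $d$ is at most the minimum weight of $\mathrm{Centraliser}(S_{t,u})$, over all pairs $(t,u)$ of non-zero elements of $T$. It then suffices to identify, for each such pair, that minimum weight with $d(\mathcal X_{t,u})$. This I would do with the dictionary set up in this section: $\tau$ carries $\mathrm{Centraliser}(S_{t,u})$ to the symplectic dual $C_{t,u}^{\perp_s}$ of $C_{t,u}=\tau(S_{t,u})$, a codeword of symplectic weight $w$ corresponds to $w$ dependent points lying on the $w$ distinct lines of $\mathcal X_{t,u}$ indexed by its support, and conversely any $w$ dependent points on $w$ distinct lines of $\mathcal X_{t,u}$ yield such a codeword. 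Hence the minimum weight of $\mathrm{Centraliser}(S_{t,u})$ is exactly $d(\mathcal X_{t,u})$, and minimising over pairs gives $d\geqslant\min_{t,u}d(\mathcal X_{t,u})$.

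The step needing the most care is verifying that $\mathcal X_{t,u}$ is indeed the projection of $\mathcal X$ from the points $t$ and $u$, and that this projection is again a quantum set of lines in PG$(n-k-3,2)$ so that $d(\mathcal X_{t,u})$ makes sense. For the first part I would change basis by a non-singular $A_{t,u}$ whose first two columns are $t$ and $u$: then $A_{t,u}^{-1}G$ is another generator matrix for $C$, deleting its first two rows realises the passage to $S_{t,u}$, and deleting the first two coordinates is precisely projection from $t$ and $u$. For the second part, every codimension-two subspace of PG$(n-k-3,2)$ lifts to a codimension-two subspace of PG$(n-k-1,2)$ containing $t$ and $u$, which by hypothesis is skew to an even number of the lines of $\mathcal X$; since projection does not change which of the remaining lines it is skew to, the evenness property is inherited, with only the at most two lines of $\mathcal X$ that collapse to points needing separate bookkeeping. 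I would also remark that the purity subtlety noted for impure additive codes does not arise here, because we are only extracting a lower bound from $\mathrm{Centraliser}(S_{t,u})$ rather than computing the exact distance of an impure code.
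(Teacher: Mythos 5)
Your proposal is correct and follows essentially the same route as the paper: the paper's proof is precisely the discussion preceding the theorem, which combines Theorem~\ref{mindistsum} with the dictionary $\tau(\mathrm{Centraliser}(S_{t,u}))=C_{t,u}^{\perp_s}$ identifying symplectic weight-$w$ codewords with $w$ dependent points on $w$ distinct lines of $\mathcal X_{t,u}$, and with the identification of $\mathcal X_{t,u}$ as the projection of $\mathcal X$ from $t$ and $u$ after the change of basis $A_{t,u}$. Your additional checks (orthogonality of the summands for the dimension count, and the inheritance of the evenness property under projection) are details the paper treats as immediate but are handled correctly.
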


There are at least two possible ways to proceed to use Theorem~\ref{qstthm}. 

The most straightforward would be to start with an abelian subgroup $S$, where $Q(S)$ is a pure $[\![n,2^k,d']\!]$ code.  This will allow us to construct a quantum set of lines $\mathcal X$ in PG$(n-k-1,2)$ and try to find the largest $T \subset {\mathbb F}_2^{n-k}$ with the property that $d(\mathcal X_{t,u})$ is at least $d$, as $t$ and $u$ vary over all pairs of non-zero elements of $T$. One may choose any $d \leqslant d'$, although in many cases that choosing $d=d'$ results in $T=\{0\}$ and one is not able to construct anything more than $Q(S)$.

We construct a graph $\Gamma$ whose vertices are the points of $T$ and where $t$ and $u$ are joined by an edge if and only if $d(\mathcal X_{t,u}) \geqslant d$ and choose $T$ to be a largest clique in $\Gamma$.

Consider for example the $[\![5,0,3]\!]$ code $Q(S)$, where $S$ is the abelian subgroup generated by
$$
\begin{array}{rcl}
M_1 & = & XZ\one \one Z \\
M_2 & = & ZXZ\one \one  \\
M_3 & = & \one ZXZ\one  \\
M_4 & = & \one \one ZXZ \\
M_5 & = & Z\one \one Z X\\
\end{array}
$$
Here, we are suppressing the tensor product symbol between the matrices.

Following the discussion above, the matrix
$$
\mathrm{G}=
\left(\begin{array}{ccccc|ccccc} 
1 & 0 & 0 & 0 & 0 & 0 & 1 & 0 & 0 & 1 \\ 
0 & 1 & 0 & 0 & 0 & 1 & 0 & 1 & 0 & 0 \\ 
0 & 0 & 1 & 0 & 0 & 0 & 1 & 0 & 1 & 0 \\ 
0 & 0 & 0 & 1 & 0 & 0 & 0 & 1 & 0 & 1 \\ 
0 & 0 & 0 & 0 & 1 & 1 & 0 & 0 & 1 & 0 \\ 
\end{array}\right)
$$
and the set of lines
$$
\mathcal X=\{ 
\langle e_1,e_2+e_5 \rangle, 
\langle e_2,e_1+e_3 \rangle, 
\langle e_3,e_2+e_4 \rangle,
\langle e_4,e_3+e_5 \rangle,
\langle e_5,e_1+e_4 \rangle
\},
$$
where $e_i$ is the $i$-th element in the canonical basis of ${\mathbb F}_2^5$.

There are $16$ points in PG$(4,2)$ which are not incident with any line of $\mathcal X$. We define a graph $\Gamma$ whose vertices are these 16 points and where two points $t$ and $u$ are joined by an edge if and only if they project $\mathcal X$ onto a set of lines. Note that the condition $d(\mathcal X_{t,u}) \geqslant 2$ is redundant. The edge condition can be verified by checking that $t,u$ and $x$ are linearly independent for any point $x$ incident with a line of $\mathcal X$.

A short computation using GAP \cite{GAP2020} reveals that $\Gamma$ has $60$ edges and $6$ cliques of size $5$. Thus, choosing one of these, we set
$$
T=\{ (0,0,0,0,0) ,e_1+e_2+e_4,e_2+e_3+e_5,e_1+e_3+e_4,e_2+e_4+e_5,e_1+e_3+e_5\}.
$$
Then, Theorem~\ref{qstthm} implies $Q(S,T)$ is a $(\!(5,6,2)\!)$ code.

The second possible way to apply Theorem~\ref{qstthm} is to fix $T$ and then try and construct $\mathcal X$ (and hence $S$). 
Suppose, as in the previous paragraph, we would like to construct a $(\!(5,6,2)\!)$ code. We need to find a quantum set of lines $\mathcal X$, with the property that no point incident with a line of $\mathcal X$ is spanned by two points of $T$. This will ensure that the projection of $\mathcal X$ from any two points of $T$ is onto a set of lines of PG$(2,2)$.

If four of the elements of $T\setminus \{0\}$ span a two-dimensional subspace $\pi$, (i.e. a PG$(2,2)$) then the lines of $\mathcal X$ must be skew to $\pi$, otherwise there is a point incident with a line of $\mathcal X$ which is in the span of two points of $T\setminus \{0\}$. This contradicts the fact that $\mathcal X$ is a quantum set of lines. Likewise, if five of the elements of $T\setminus \{0\}$ span a three-dimensional subspace $\pi$ then any point of $\pi$ is in the span of two points of $T$, which implies that the lines  of $\mathcal X$ must be skew to $\pi$, a hyperplane of PG$(4,2)$, which is impossible.

Thus, we can assume the elements of $T\setminus \{0\}$ are linearly independent and can choose a basis so that 
$$
T=\{ (0,0,0,0,0) ,e_1,e_2,,e_3,e_4,e_5\}.
$$
We can now try and deduce $S$. The projection $\mathcal X_t$ of $\mathcal X$, from any point of $t \in T \setminus \{0\}$, should be a set of $5$ lines in $\pi_t$, a three-dimensional space PG$(3,2)$. These lines are not incident with the basis points, otherwise the projection onto PG$(2,2)$ from two points of $T$ would not be a set of lines. Since $\mathcal  X_t$ is a quantum set of lines  it has the property that every line of $\pi_t$ is skew to an even number of the lines of $\mathcal X_t$. Since no line of $\mathcal X_t$ is incident with a basis point, each weight $2$ point (a point spanned by two basis points) must be incident with a line of $\mathcal X_t$. Furthermore, every line not incident with a basis point is incident with a weight two point. Therefore, four of the lines of $\mathcal X_t$ are incident with one weight two point and one of them is incident with two. Up to permutation of coordinates suppose the latter line joins $e_1+e_4$ and $e_2+e_3$. The other four lines consist of two weight 3 points and a weight 2 point. Therefore, up to a permutation of the coordinates, the unique configuration of lines is given in Figure~\ref{qsolxt}. The lines of $\mathcal X_t$ are in bold.

\begin{figure}[h] 
\begin{center}
\begin{tikzpicture} [scale=1.4]
shorten > = .1pt, auto,thick]
\filldraw[black] (0,0) circle (1.5pt) node[label=left:$e_1+e_3$] (v0) {};
\filldraw[black] (2,3.46) circle (1.5pt) node[label=above:$e_2+e_3$]  (v1) {};
\filldraw[black] (4,0) circle (1.5pt) node[label=right:$e_3+e_4$] (v2) {};
\filldraw[black] (1,1.75) circle (1.5pt) node[label=left:$e_1+e_2$] (v3) {};
\filldraw[black] (3,1.75) circle (1.5pt) node[label=right:$e_2+e_4$] (v4) {};
\filldraw[black] (2,-0.6) circle (1.5pt) node[label=below:$e_1+e_3+e_4$] (v6) {};
\filldraw[black] (0,3.46) circle (1.5pt) node[label=left:$e_1+e_4$] (v7) {};
\filldraw[black] (4,3.46) circle (1.5pt) node[label=right:$e_1+e_2+e_3+e_4$] (v8) {};
   \coordinate (c1) at (intersection of v0--v4 and v2--v3);

        \coordinate (c2) at (intersection of v3--v6 and v0--c1);

             \coordinate (c3) at (intersection of v4--v6 and v2--c1);
\node[label=above:$e_1+e_2+e_4$] at (c1) {};
\node[label=right:$e_1+e_2+e_3$] at (c3) {};
\node[label=left:$e_2+e_3+e_4$] at (c2) {};
\path[lightgray] (v0) edge node {} (v1);
\path[lightgray] (v2) edge node {} (v1);
\path[lightgray] (v6) edge node {} (c1);
\path[lightgray] (v1) edge node {} (c1);
\path[black] (v0) edge node {} (c1);
\path[black] (v7) edge node {} (v8);
\path[black] (v2) edge node {} (c1);
\path[black] (v3) edge node {} (v6);
\path[black] (v4) edge node {} (v6);
     \fill[black] (c2) circle (1.5pt) ;
          \fill[black] (c1) circle (1.5pt);
     \fill[black] (c3) circle (1.5pt);
     
\end{tikzpicture}
\caption{The unique quantum set of lines $X_t$ not incident with the basis points.}
\label{qsolxt}
\end{center}
\end{figure}
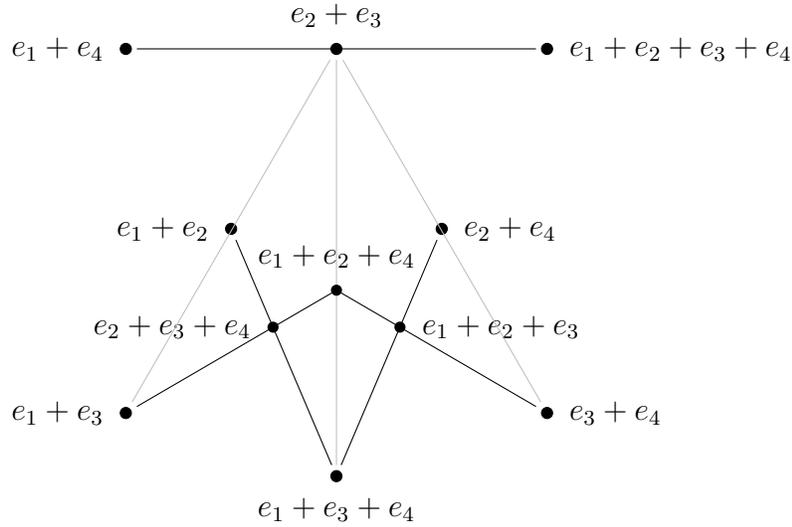

Therefore, up to permutation of the coordinates, we deduce that four of the five rows of $\mathrm{G}$ are
$$
\left(\begin{array}{ccccc|ccccc} 
0 & 0 & 1 & 1 & 1 & 1 & 1 & 1 & 0 & 1 \\ 
1 & 0 & 0 & 1 & 1 & 1 & 1 & 1 & 1 & 0 \\ 
1 & 1 & 0 & 0 & 1 & 0 & 1 & 1 & 1 & 1 \\ 
1 & 1 & 1 & 0 & 0 & 1 & 0 & 1 & 1 & 1 \\ 
\end{array}\right)
$$

Since the projection of any two of the basis points projects onto a point of PG$(2,2)$ there can be no points of weight two on the lines of $\mathcal X$. Therefore
$$
\mathrm{G}=
\left(\begin{array}{ccccc|ccccc} 
u_1 & 1 & 1 & 1 & u_5 & u_1+1 & u_7 & 0 & u_9 & u_{5}+1 \\
0 & 0 & 1 & 1 & 1 & 1 & 1 & 1 & 0 & 1 \\ 
1 & 0 & 0 & 1 & 1 & 1 & 1 & 1 & 1 & 0 \\ 
1 & 1 & 0 & 0 & 1 & 0 & 1 & 1 & 1 & 1 \\ 
1 & 1 & 1 & 0 & 0 & 1 & 0 & 1 & 1 & 1 \\ 
\end{array}\right)
$$
for some $u_1,u_5,u_7,u_9$. Since  
$$
\sum_{i=1}^n (u_i v_{i+n}-v_i u_{i+n})=0,
$$
for any two rows $u$ and $v$ of $\mathrm{G}$, we deduce that
$$
u_1=u_5\neq u_7=u_9.
$$
This gives two solutions which generate the same subgroup, the subgroup $S'$ generated by
$$
\begin{array}{rcl}
M_1' & = & ZYXYZ \\
M_2'& = & ZZYXY \\
M_3' & = &YZZYX \\
M_4' & = & XYZZY \\
M_5' & = & YXYZZ\\
\end{array}
$$
Observe that $M_i'M_{i+1}'M_{i+3}'=M_i$ (indices read modulo $n$), so $S'=S$.

Thus we have proved that, up to permutation of the non-identity Pauli operators in a coordinate and a permutation of the coordinates (the qubits), the $(\!(5,6,2)\!)$ code is unique.

\section{Stabiliser codes as direct sums of stabiliser codes}

In this section we investigate the problem of determining when $Q(S,T)$ is itself a stabiliser code. Obviously a necessary condition is that $|T|=2^r$ for some $r$. In the following theorem, we prove a sufficient condition.

\begin{theorem} \label{stabasdirectsum}
Let $S$ be an abelian group of size $2^{n-k}$ and let $T$ be an $r$-dimensional subspace. Then $Q(S,T)=Q(S')$ for some subgroup $S'$ of $S$ of size $2^{n-r-k}$.
\end{theorem}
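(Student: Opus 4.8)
The plan is to reduce, by a change of generating set for $S$, to the case where $T$ is the coordinate subspace $\langle e_1,\ldots,e_r\rangle$, in which the direct sum $\bigoplus_{t\in T}Q_t(S)$ visibly collapses to a single additive stabiliser code. The one place where care is needed is that the change of generators must be carried out inside the group $S$, not via the map $\tau^{-1}$, so that no stray signs appear.

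First I would record the following elementary facts. Since $-\one\notin S$, every element of $S$ is Hermitian and squares to $\one$, and the generators $M_1,\ldots,M_{n-k}$ pairwise commute; hence $v\mapsto g_v:=\prod_{i\,:\,v_i=1}M_i$ is a well-defined group isomorphism ${\mathbb F}_2^{n-k}\to S$. Directly from the definition of $Q_t(S)$, if $\ket{\psi}\in Q_t(S)$ then $g_v\ket{\psi}=(-1)^{\sum_i v_it_i}\ket{\psi}$ for every $v$. So, as in the proof of Lemma~\ref{stulemma}, if $v^{(1)},\ldots,v^{(n-k)}$ is any basis of ${\mathbb F}_2^{n-k}$ and $V$ is the invertible matrix with rows $v^{(j)}$, then $\{g_{v^{(1)}},\ldots,g_{v^{(n-k)}}\}$ is again a generating set for $S$ and $Q_t(S)$, defined with the original generators, is contained in the space $\widetilde Q_{Vt}(S)$ defined analogously but with respect to the new generating set. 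Since $-\one\notin S$ and each of the two generating sets is independent, both of these spaces have dimension $2^k$ (the projector $\prod_j\frac{1}{2}(\one+(-1)^{(Vt)_j}g_{v^{(j)}})$ has trace $2^k$), so in fact $Q_t(S)=\widetilde Q_{Vt}(S)$.

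Now I would use the hypothesis that $T$ is an $r$-dimensional subspace to choose the basis $v^{(j)}$ so that $V(T)=\langle e_1,\ldots,e_r\rangle$: take a basis of $T$, extend it to a basis of ${\mathbb F}_2^{n-k}$, and let $V$ be the inverse of the matrix whose columns are that extended basis. With this choice,
$$
Q(S,T)=\bigoplus_{t\in T}Q_t(S)=\bigoplus_{t\in T}\widetilde Q_{Vt}(S)=\bigoplus_{s\in\langle e_1,\ldots,e_r\rangle}\widetilde Q_s(S).
$$
For $s\in\langle e_1,\ldots,e_r\rangle$ one has $s_j=0$ for $j>r$, so $\widetilde Q_s(S)$ is cut out by $g_{v^{(j)}}\ket{\psi}=(-1)^{s_j}\ket{\psi}$ for $j\le r$ together with $g_{v^{(j)}}\ket{\psi}=\ket{\psi}$ for $j>r$; summing over all $s_1,\ldots,s_r\in{\mathbb F}_2$ and using that the commuting operators $g_{v^{(1)}},\ldots,g_{v^{(n-k)}}$ are simultaneously diagonalisable, the last display is exactly the intersection of the eigenspaces of eigenvalue $1$ of $g_{v^{(r+1)}},\ldots,g_{v^{(n-k)}}$. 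Setting $S':=\langle g_{v^{(r+1)}},\ldots,g_{v^{(n-k)}}\rangle$, a subgroup of $S$ of size $2^{n-r-k}$, this intersection is precisely $Q(S')$, which finishes the proof.

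I expect the main obstacle to be the sign bookkeeping under the change of generating set: replacing $G$ by $VG$ and applying $\tau^{-1}$ row by row can yield a generator equal to $-g_v$ rather than $g_v$, which does not lie in $S$; performing the change of basis internally in the group, as above, avoids this. The only other point to verify with care is that $\widetilde Q_s(S)$ has the full dimension $2^k$ for every $s$ — this is what upgrades the containment $Q_t(S)\subseteq\widetilde Q_{Vt}(S)$ to an equality and is what makes $|S'|=2^{n-r-k}$ — and this rests again on the standing assumption $-\one\notin S$.
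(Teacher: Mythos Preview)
Your argument is correct and follows essentially the same route as the paper's: reduce $T$ to a coordinate subspace by a change of generating set for $S$, then take $S'$ to be the subgroup generated by the generators corresponding to the complementary coordinates. The paper's proof is terser---it simply notes $Q(S,T)\subseteq Q(S')$ because each $t\in T$ has zero entries on the first $n-k-r$ coordinates, and then matches dimensions $\dim Q(S,T)=|T|\,2^k=2^{r+k}=\dim Q(S')$---whereas you carry out the direct-sum decomposition explicitly and are more scrupulous about performing the basis change inside $S$ to avoid stray signs, a point the paper glosses over.
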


\begin{proof}
By applying a change of basis, we can assume that
$$
T=(0,\ldots,0) \cup \langle e_{n-k-r+1},\ldots,e_{n-k} \rangle.
$$

Let $\{M_1,\ldots,M_{n-k}\}$ be a set of generators of $S$. For all $\ket{\psi} \in Q(S,T)$,
$$
M_i \ket{\psi}=\ket{\psi},
$$
for $i \in \{1,\ldots,n-k-r\}$. Hence, 
$$
Q(S,T) \leqslant Q(S'),
$$
where the subgroup $S'$ is generated generated by $\{M_1,\ldots,M_{n-r-k}\}$.

Since $\dim Q(S,T)=\dim Q(S')=2^{r+k}$, we have $Q(S,T)=Q(S')$.
\end{proof}

Is is tempting to believe that the contrary statement is also true. That if $Q(S,T)=Q(S')$ for some subgroup $S'$ of $S$ then $T$ must be a subspace. However, this is not the case. For example, if
$$
T=\{e_1,e_2\}
$$
then $\dim Q(S,T)=2^{k+1}$ and since 
$$
Q(S,T) \leqslant Q(S'),
$$
where $S'$ is generated by $-M_1M_2,M_3,\ldots,M_{n-k}$, we conclude that $Q(S,T)=Q(S')$.

The following theorem, which is equivalent to \cite[Theorem 2]{YCO2007}, states that any stabiliser code can be obtained as a direct sum of one-dimensional stabiliser codes.

\begin{theorem}
Let $Q(S')$ be a $[\![n,k,d]\!]$ stabiliser code. Then $Q(S')=Q(S,T)$ for some $S \supseteq S'$ of size $2^n$ and some $k$-dimensional subspace $T \subset {\mathbb F}_2^{n}$.  Hence, any stabiliser code is the direct sum of $[\![n,0,d']\!]$ stabiliser codes for some $d' \geqslant d$.
\end{theorem}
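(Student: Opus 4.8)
\emph{Proof proposal.} The plan is to prove the first assertion by extending $S'$ to a maximal abelian subgroup of $\mathcal P_n$ and then reading Theorem~\ref{stabasdirectsum} backwards, and then to deduce the ``Hence'' by unfolding the definition of $Q(S,T)$.

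First I would pass to the binary picture. The group $S'$ has order $2^{n-k}$, so $C=\tau(S')$ is a $[2n,n-k]$ code which, because $S'$ is abelian, satisfies $C\leqslant C^{\perp_s}$ for the form in \eqref{perp}; that is, $C$ is isotropic for this symplectic form. An isotropic subspace is contained in a maximal isotropic (Lagrangian) one $\widetilde C\leqslant{\mathbb F}_2^{2n}$ of dimension $n$: as long as $\widetilde C\subsetneq\widetilde C^{\perp_s}$ one may enlarge $\widetilde C$ by any vector of $\widetilde C^{\perp_s}\setminus\widetilde C$, which keeps it isotropic because the form is alternating. Now pick a basis of $\widetilde C$ extending a fixed generating set $\tau(M_1),\dots,\tau(M_{n-k})$ of $C$ and lift the new basis vectors to Pauli operators $M_{n-k+1},\dots,M_n$ with $\pm 1$ scalars (any signs will do). Since $\tau(M_1),\dots,\tau(M_n)$ span the isotropic space $\widetilde C$, the criterion expressed by \eqref{perp} says the $M_i$ pairwise commute, so $S:=\langle M_1,\dots,M_n\rangle$ is abelian and contains $S'$. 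Each $M_i$ has order dividing $2$, so every element of $S$ is $\prod_i M_i^{a_i}$ for a unique $a\in{\mathbb F}_2^n$ (uniqueness because $\tau(\prod M_i^{a_i})=\sum a_i\tau(M_i)$ and the $\tau(M_i)$ are independent); hence $|S|=2^n$, and the only scalar in $S$ is $\one$ (the case $a=0$), so $-\one\notin S$. Thus $S\supseteq S'$ is the desired maximal abelian subgroup.

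Next I would take $T=\langle e_{n-k+1},\dots,e_n\rangle\subset{\mathbb F}_2^n$, a $k$-dimensional subspace, and apply Theorem~\ref{stabasdirectsum} to the pair $(S,T)$, now with the group of size $2^n$, so that the parameter playing the role of ``$k$'' in that theorem is $0$ and the one playing the role of ``$r$'' is our $k$. That theorem gives $Q(S,T)=Q(S'')$ for a subgroup $S''$ of order $2^{n-k}$, and inspecting its proof, $S''$ is generated by the generators attached to the first $n-k$ coordinates; since $T$ is already in the normal form used there, $S''=\langle M_1,\dots,M_{n-k}\rangle=S'$. Hence $Q(S')=Q(S,T)$, which is the first assertion.

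Finally, $Q(S')=Q(S,T)=\bigoplus_{t\in T}Q_t(S)$ is a direct sum of $2^k=\dim Q(S')$ pieces. Each $Q_t(S)$ is the common $1$-eigenspace of the commuting operators $(-1)^{t_i}M_i$, i.e. the stabiliser code of $S_t=\langle(-1)^{t_i}M_i\rangle$; the same scalar bookkeeping as above shows $-\one\notin S_t$, so $Q_t(S)$ is a $1$-dimensional $[\![n,0,d']\!]$ stabiliser code, where $d'$ is the minimum weight of the non-identity elements of $S$ — independent of $t$, since replacing operators by $\pm$ themselves does not change weights. To see $d'\geqslant d$ I would note that for $\one\neq g\in S'$ we have $\mathrm{wt}(g)\geqslant d$ by purity of $Q(S')$ (every non-identity element of $\mathrm{Centraliser}(S')$ has weight $\geqslant d$), while for $g\in S\setminus S'$ we have $g\in\mathrm{Centraliser}(S')\setminus S'$ because $S\supseteq S'$ is abelian, so again $\mathrm{wt}(g)\geqslant d$ by definition of the minimum distance. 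I expect the only genuine care needed to be organisational: controlling the $\pm1$ signs so that none of $S$ or the $S_t$ contains $-\one$, so that all the codes in sight are non-trivial and of the expected dimension, and matching indices when specialising Theorem~\ref{stabasdirectsum}. Purity of $Q(S')$ is the one hypothesis used beyond the statement, and only for $d'\geqslant d$; for an impure $Q(S')$ one still obtains a decomposition into $[\![n,0,d']\!]$ codes, but with $d'$ merely the minimum weight of $S$.
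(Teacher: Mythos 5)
Your proposal is correct and follows essentially the same route as the paper: extend $C=\tau(S')$ to a self-dual (Lagrangian) code under the form \eqref{perp}, lift a completed basis to commuting Pauli operators to obtain $S\supseteq S'$ of size $2^n$, take $T=\langle e_{n-k+1},\ldots,e_n\rangle$, and conclude via Theorem~\ref{stabasdirectsum}; your sign bookkeeping and explicit isotropic-extension argument only fill in details the paper leaves implicit. Your remark that $d'\geqslant d$ needs purity is a genuine point worth keeping: the paper's one-line justification via $\mathrm{Centraliser}(S')\supseteq\mathrm{Centraliser}(S)$ does not control the weights of the elements of $S'$ itself, which under the paper's convention for the minimum distance of $[\![n,0,d']\!]$ codes do contribute to $d'$, so for an impure (degenerate) code the stated inequality can fail without that hypothesis.
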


\begin{proof}
Let $\{M_1,\ldots,M_{n-k}\}$ generate $S'$. We can extend $S'$ to an abelian subgroup $S$ of size $2^n$, where $S' \supseteq S$. This is most easily seen in the binary code setting, where we can extend $\tau(S)=C < C^{\perp_s}$, to a code $C'>C$ such that $C'=(C')^{\perp}$. We can extend$\{M_1,\ldots,M_{n-k}\}$ to a set $\{M_1,\ldots,M_{n}\}$ which generate $S'$. If we then set
$$
T=\langle e_{n-k+1},\ldots,e_n\rangle,
$$
we have that $Q(S')=Q(S,T)$.

Note that since $\mathrm{Centraliser}(S') \geqslant \mathrm{Centraliser}(S)$, it follows that $d' \geqslant d$.
\end{proof}

\section{Graphical non-additive stabiliser codes} \label{secgraph}

The case $k=0$ is equivalent to graphical quantum error-correcting codes. To see this, note that we can choose a basis for the geometry so that the initial $n\times n$ matrix of $\mathrm{G}$ is the identity matrix. We can then choose a basis for each line of $\mathcal X$ so that the $i$-th coordinate of the $(i+n)$-th column in zero. The matrix $\mathrm{G}$ is then of the form $(\mathrm{I}_n \ | \ \mathrm{A})$ for some $n \times n$ matrix $\mathrm{A}$. The condition (\ref{perp}) implies that $\mathrm{A}$ is symmetric, so we can interpret $\mathrm{A}$ as the adjacency matrix of a simple graph $\Gamma$ on $n$ vertices. The elements of $T$ can then be described by colouring the appropriate vertices in $|T|$ copies of the graph, see \cite[Figure 1]{YCO2007}. 

In \cite{YCO2007}, the set $T$ is called a {\em coding clique}. The condition in Theorem~\ref{qstthm} is given as a purely combinatorial condition. One makes a set $R$ of subsets of $\{1,\ldots,n\}$ which consists of, for each subset $U$ of the vertices of $\Gamma$ of size at most $d-1$, the symmetric difference of the neighbourhood of $U$. One then deduces the largest set $T$ of subsets of $\{1,\ldots,n\}$ with the property that the symmetric difference of any two elements of $T$ is not an element of $R$.

Theorem~\ref{qstthm} allows us to interpret this condition geometrically. We consider $U$ as a subset of at most $d-1$ points incident with distinct lines of $\mathcal X$. We let $R$ be the set of points of PG$(n-1,2)$ which are in the span of the points in $U$. The set $T$ is a set of points of PG$(n-1,2)$ with the property that no two points of $T$ span a point in $R$.

Let us consider, as an example, the $(\!(9,12,3)\!)$ code. The matrix
      
$$
 \mathrm{G}=
\left(\begin{array}{ccccccccc|ccccccccc} 
    1& 0& 0& 0& 0& 0& 0& 0& 0& 0& 1& 0& 0& 0& 0& 0& 0& 1\\
      0& 1& 0& 0& 0& 0& 0& 0& 0& 1& 0& 1& 0& 0& 0& 0& 0& 0\\
      0& 0& 1& 0& 0& 0& 0& 0& 0& 0& 1& 0& 1& 0& 0& 0& 0& 0\\
      0& 0& 0& 1& 0& 0& 0& 0& 0& 0& 0& 1& 0& 1& 0& 0& 0& 0\\
      0& 0& 0& 0& 1& 0& 0& 0& 0& 0& 0& 0& 1& 0& 1& 0& 0& 0\\
      0& 0& 0& 0& 0& 1& 0& 0& 0& 0& 0& 0& 0& 1& 0& 1& 0& 0\\
      0& 0& 0& 0& 0& 0& 1& 0& 0& 0& 0& 0& 0& 0& 1& 0& 1& 0\\
      0& 0& 0& 0& 0& 0& 0& 1& 0& 0& 0& 0& 0& 0& 0& 1& 0& 1\\
      0& 0& 0& 0& 0& 0& 0& 0& 1& 1& 0& 0& 0& 0& 0& 0& 1& 0
          \end{array}
      \right)
$$

and the set of lines
$$
\mathcal X=\{ 
\langle e_1,e_2+e_9 \rangle, 
\langle e_2,e_1+e_3 \rangle, 
\langle e_3,e_2+e_4 \rangle,
\langle e_4,e_3+e_5 \rangle,
\langle e_5,e_4+e_6\rangle,
$$
$$
\langle e_6,e_5+e_7 \rangle, 
\langle e_7,e_6+e_8 \rangle, 
\langle e_8,e_7+e_9 \rangle,
\langle e_9,e_1+e_8 \rangle
\}.
$$

We consider the span of two points on lines of $\mathcal X$ and their intersection with the $5$-dimensional subspace $\pi$, defined by
$$
X_2+X_6=0, \ X_3+X_8=0, \ X_5+X_9=0.
$$
One can quickly verify that only 27 points of $\pi$ are in the span of two points incident with lines of $\mathcal X$. We restrict the vertices of the graph $\Gamma$ to the remaining 36 points of $\pi$. A quick calculation on GAP shows that this graph has 12 cliques of size $11$. The structure of the $11$ non-zero points of $T$, obtained from one of these cliques, is a cone with vertex point $(1,0,0,1,0,0,1,0,0)$ and a base of five linearly independent points.
For example, one can take $T$ be be the following vectors.
$$
\begin{array}{cccc}
(  0,  0, 0,  0, 0,  0,  0, 0, 0 ) &
 (  0,  0, 1,  0, 1,  0,  0, 1, 1 ) & 
  (  0,  0, 1, 1,  0,  0,  0, 1,  0 ) & 
  (  0, 1,  0,  0, 1, 1,  0,  0, 1 )\\
  (  0, 1,  0, 1,  0, 1, 1,  0,  0 )& 
  (  0, 1, 1,  0, 1, 1, 1, 1, 1 ) &
  ( 1,  0,  0, 1,  0,  0, 1,  0,  0 ) &
  ( 1,  0, 1,  0,  0,  0, 1, 1,  0 ) \\
  ( 1,  0, 1, 1, 1,  0, 1, 1, 1 )&
  ( 1, 1,  0,  0,  0, 1,  0,  0,  0 )&
  ( 1, 1,  0, 1, 1, 1, 1,  0, 1 )&
  ( 1, 1, 1, 1, 1, 1,  0, 1, 1 ) 
  \end{array}
  $$

\section{Qupit non-additive stabiliser codes}

Perhaps the most useful aspect of the geometrical construction of non-additive stabiliser codes is that it directly generalises to the qupit case, i.e. when the local dimension is any prime $p$. There are a few differences that need to be pointed out. The points of PG$(n-k-1,p)$ are the one-dimensional subspaces of ${\mathbb F}_p^{n-k}$ and lines are two-dimensional subspaces of ${\mathbb F}_p^{n-k}$. Note that there are $p+1$ one-dimensional subspaces contained in a two-dimensional subspace, so in the geometry there are $p+1$ points incident with a line. The condition that if $\mathcal X$ is a quantum set of lines of PG$(n-k-1,p)$ then every co-dimension two subspace is skew to an even number of lines no longer holds. However, given an abelian subgroup $S$, the construction of the quantum set of lines $\mathcal X$ follows in the same way. Following Ketkar et al \cite{KKKS2006}, we define the Pauli operators on $({\mathbb C}^p)^{\otimes n}$ as follows.

Let  $\{ \ket x \ | \ x \in {\mathbb F}_p^n\}$ be a basis of $({\mathbb C}^p)^{\otimes n}$ and let $\omega$ be a primitive complex $p$-th root of unity. Define
$$
X(a)\ket x=\ket{x+a}
$$
for each $a \in {\mathbb F}_p^n$ and
$$
Z(b)\ket x=\omega^{x \cdot  b}\ket x.
$$
for each $b \in {\mathbb F}_p^n$.

The Pauli group, for $p \geqslant 3$, is
$$
\{ \omega^cX(a)Z(b) \ | \ a,b \in {\mathbb F}_p^n, c \in  {\mathbb F}_p\}.
$$

We define the non-additive stabiliser code for a subset $T \subseteq {\mathbb F}_p^{n-k}$ and an abelian subgroup $S$ of the Pauli  group as before. For $t \in T$,
$$
Q_t(S)
$$
is the intersection of the eigenspaces of eigenvalue $1$ of $\omega^{t_i}M_i$ ($i =1,\ldots,n-k$) and
$$
Q(S,T)=\bigoplus_{t \in T} Q_t(S).
$$
For $t,u\in T\setminus \{0\}$ defining distinct points of PG$(n-k-1,p)$, the set of lines $\mathcal X_{t,u}$ is again defined as the set of lines of PG$(n-k-3,p)$ obtained from $\mathcal X$ be projection from $t$ and $u$. 

Then all proofs work as before, although in the proof of Theorem~\ref{mindistsum} we need to modify slightly the argument. If $E \not\in \mathrm{Centraliser}(S_{t.u})$ then we deduce that there is an $M_{t,u} \in S_{t,u}$ such that
$$
EM_{t,u}=\omega^i M_{t,u}E,
$$
for some $i \in \{1,\ldots,p-1\}$. Note that, since 
$$
EM_{t,u}^j=\omega^{ij} M_{t,u}^jE,
$$
we can always find an $M_{t,u} \in S_{t,u}$ such that
$$
EM_{t,u}=\omega M_{t,u}E.
$$
Thus, we have that 
$$
PEP=\omega PEP,
$$
from which it follows that $PEP=0$.

Theorem~\ref{qstthm} generalises to the following theorem, where the subscript in $(\!(n,K,d)\!)_p$ indicates the local dimension.

\begin{theorem} \label{allpthm}
Let $T \subset {\mathbb F}_p^{n-k}$. Let $\mathcal X$ be the quantum set of lines given by the abelian subgroup $S$. The code $Q(S,T)$ is a $(\!(n,|T|p^k,d)\!)_p$ code, where $d$ is at least the size of the minimum set of dependent points on distinct lines of $\mathcal X_{t,u}$, as $t$ and $u$ vary over all pairs of non-zero elements of $T$ defining distinct points of PG$(n-k-1,p)$.
\end{theorem}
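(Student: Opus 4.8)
The plan is to follow the proof of Theorem~\ref{qstthm}, now over $\mathbb F_p$, assembling the three ingredients the preceding discussion has already set up: the dimension count for $Q(S,T)$, the qupit form of Theorem~\ref{mindistsum} (with $\omega$ in place of $-1$), and the geometric dictionary between $\mathrm{Centraliser}(S_{t,u})$ and dependent point configurations on the lines of $\mathcal X_{t,u}$. The value $K=|T|p^{k}$ will come from the dimension count and the bound $d\geqslant\min_{t,u}d(\mathcal X_{t,u})$ from the other two.

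For the dimension, I would first fix independent generators $M_1,\dots,M_{n-k}$ of $S$, which is possible because $\tau(S)$ is an $(n-k)$-dimensional code over $\mathbb F_p$, and note that for $p$ odd every operator $X(a)Z(b)$ has order $p$, so each $\omega^{t_i}M_i$ is a unitary of order $p$ with eigenvalues among the $p$-th roots of unity. Since $S$ is isotropic of rank $n-k$, the qupit stabiliser formalism (as in \cite{KKKS2006}) shows that the common $1$-eigenspace $Q_t(S)$ has dimension $p^{n}/p^{n-k}=p^{k}$ for every $t$. For distinct $t,u\in T$ choose a coordinate $j$ with $t_j\neq u_j$; then $M_j$ acts on $Q_t(S)$ as the scalar $\omega^{-t_j}$ and on $Q_u(S)$ as $\omega^{-u_j}\neq\omega^{-t_j}$, and, $M_j$ being normal, these eigenspaces are orthogonal. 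Hence the spaces $Q_t(S)$, $t\in T$, are pairwise orthogonal and $\dim Q(S,T)=|T|p^{k}$.

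For the distance, I would invoke the qupit version of Theorem~\ref{mindistsum}: if $E=E_iE_j$ satisfies $\mathrm{wt}(E)\leqslant d-1$ with $d$ at most the least weight $\mu(S_{t,u})$ of a non-identity element of $\mathrm{Centraliser}(S_{t,u})$, over all admissible pairs $t,u$, then $E\notin\mathrm{Centraliser}(S_{t,u})$ for every such pair; Lemma~\ref{stulemma} together with the identity $PEP=\omega PEP$ then forces $PEP=0$, and Theorem~\ref{qecthm} supplies a recovery map. It then remains to identify $\mu(S_{t,u})$ with $d(\mathcal X_{t,u})$. Writing $C_{t,u}=\tau(S_{t,u})$, weight corresponds to symplectic weight, and the weight-preserving bijection $v\mapsto(v_{n+1},\dots,v_{2n},-v_1,\dots,-v_n)$ carries $C_{t,u}^{\perp_s}$ onto the Euclidean dual $C_{t,u}^{\perp}$, i.e.\ onto the right kernel of a generator matrix $\mathrm{G}_{t,u}$ whose $i$-th and $(i+n)$-th columns span the $i$-th line $\ell_i$ of $\mathcal X_{t,u}$. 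A kernel vector of symplectic support $I$ produces for each $i\in I$ a point on $\ell_i$ --- a genuine point, since the hypothesis on $T$ forces each $\ell_i$ to be a true line and hence its two defining columns to be independent --- and the kernel relation says exactly that these $|I|$ points on distinct lines of $\mathcal X_{t,u}$ are linearly dependent; reversing the construction yields $\mu(S_{t,u})=d(\mathcal X_{t,u})$, whence $d\geqslant\min_{t,u}d(\mathcal X_{t,u})$.

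The $\omega$-bookkeeping in the adapted proof of Theorem~\ref{mindistsum} and the verification that the Euclidean--symplectic correspondence over $\mathbb F_p$ respects supports are routine. I expect the step needing most care to be the dimension count: one must check both that the generators may be taken independent and --- using $p$ odd --- that the joint eigenspaces of the $\omega^{t_i}M_i$ split evenly. This is the one place where the characteristic genuinely enters, and where one must lean on the qupit stabiliser formalism rather than on the binary references \cite{GGMG},\cite{BCH2021}.
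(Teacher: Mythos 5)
Your proposal is correct and follows essentially the same route as the paper, which simply carries over the proofs of Theorems~\ref{mindistsum} and~\ref{qstthm} with the anticommutation factor $-1$ replaced by a root of unity $\omega^i$ (so that $PEP=\omega^i PEP$ with $i\neq 0$ forces $PEP=0$), together with the standard identification of the symplectic weights in $\mathrm{Centraliser}(S_{t,u})$ with dependent points on the lines of $\mathcal X_{t,u}$. You merely make explicit two points the paper leaves implicit (the dimension count $\dim Q(S,T)=|T|p^k$ via orthogonality of the $Q_t(S)$, and the symplectic--Euclidean dual correspondence), which is a welcome but not structurally different addition.
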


 For example, let $\mathcal X$ be the quantum set of $11$ lines PG$(6,3)$ obtained from the following $7 \times 22$ matrix over ${\mathbb F}_3$,
$$
\mathrm{G}=
\left(\begin{array}{ccccccccccc|ccccccccccc}
  1& 2& 2& 0& 0& 2& 0& 2& 0& 2& 
      2& 0& 1& 2& 2& 2& 0& 0& 0& 1& 
      2& 0 \\
   0& 1& 0& 2& 2& 0& 1& 1& 1& 
      0& 0& 2& 0& 0& 2& 2& 0& 2& 0& 
      1& 1& 0 \\ 
   1& 0& 2& 1& 1& 0& 2& 0& 0& 2& 
      0& 0& 1& 1& 0& 1& 0& 0& 1& 
      2& 0& 0 \\ 
   0& 1& 2& 2& 1& 1& 2& 2& 0& 0& 
      1& 0& 2& 0& 0& 0& 0& 2& 1& 
      2& 1& 0 \\ 
   2& 2& 1& 1& 0& 0& 2& 2& 0& 2& 
      2& 0& 2& 0& 1& 1& 2& 1& 1& 
      1& 2& 1 \\ 
   0& 0& 1& 2& 0& 2& 2& 2& 0& 0& 
      2& 2& 1& 2& 1& 1& 2& 0& 1& 2& 
      2& 0 \\ 
   0& 0& 0& 1& 1& 1& 2& 0& 1& 
      0& 2& 2& 0& 1& 1& 2& 0& 1& 2& 
      0& 0& 0 
      \end{array}
      \right)
  $$
and let $T$ be the set of the following nine points.
$$
\begin{array}{ccc}
  ( 0, 0, 0, 0, 0, 0, 0 ) &   ( 1, 0, 0, 0, 0, 0, 1 ) & ( 2, 0, 0, 0, 0, 0, 2 )\\
  ( 1, 0, 1, 1, 0, 1, 1 ) & ( 2, 0, 2, 2, 0, 2, 2 ) &  (1, 0, 2, 2, 0, 2, 1 ) \\
   (2, 0, 1, 1, 0, 1, 2 ) & ( 0, 0, 1, 1, 0, 1, 0 ) &  ( 0, 0, 2, 2, 0, 2, 0 )\\
   \end{array}
  $$
One can check that the projection from any pair of non-zero points $t,u$ of $T$ is onto a quantum set of lines $\mathcal X_{t,u}$ of PG$(4,3)$ with the property that no point is incident with two lines of $\mathcal X_{t,u}$. This latter property implies that the size of the minimum set of dependent points on distinct lines of $\mathcal X_{t,u}$ is at least $3$, since two points are dependent if and only if they are the same point. One can check that $T$ is a subspace so, by Theorem~\ref{stabasdirectsum}, $Q(S,T)$ is a $[\![11,6,3]\!]_3$ stabiliser code. Furthermore, this is an optimal stabliser code for an $[\![11,k,3]\!]_3$ code since there is no quantum MDS code (a code attaining the quantum Singleton bound) with these parameters. Recall that the quantum Singleton bound, proved by Rains in \cite{Rains1999}, states that
$$
k\leqslant n-2(d-1),
$$
which in this case gives $k \leqslant 7$. However, the existence of an $[\![11,7,3]\!]_3$ stabiliser code can be ruled out, since there is no additive MDS code of length $11$ over ${\mathbb F}_9$, see \cite{BGL2021}.

 \section{A recipe for constructing non-additive stabiliser codes}
 
Theorem~\ref{allpthm} leads to the following recipe for the construction of non-additive stabiliser codes of length $n$ and minimum distance $d$.

\begin{itemize}
\item
Choose a graph on $n$ vertices whose edges are labelled by elements of ${\mathbb F}_p$ (non-edges are labelled by zero) and form the $n \times 2n$ matrix $\mathrm{G}=(\mathrm{I}_n \ | \ \mathrm{A})$, where $\mathrm{A}$ is the (symmetric) adjacency matrix of the graph. 
\item
Let $\mathcal X$ be  the quantum set of $n$ lines of PG$(n-1,p)$, whose $i$-th line is the span of the $i$-th and $(i+n)$-th column of $\mathrm G$ and let $P$ be the set of points which are incident with a line of $\mathcal X$.
\item
Either calculate the set $R$ of points of PG$(n-1,p)$ which are not in the span of $d-1$ or less points of $P$ and choose $k$ linearly independent points $K$ from $R$ or simply find $k$ linearly independent points $K$ which are not in the span of $d-1$ or less points of $P$.
\item
Project $\mathcal X$ from the $(k-1)$-dimensional subspace spanned by the points of $K$ onto a quantum set of lines $\mathcal X'$ of PG$(n-k-1,p)$. 
\item
Calculate the set $R'$ of points of PG$(n-k-1,p)$ which are not in the span of $d-1$ or less points of $P'$, the points incident with lines of $\mathcal X'$. 
\item
Make a graph $\Gamma$ whose vertices are the points in $R'$ and where $u,v$ are joined by an edge if and only if the subspace spanned by $u$ and $v$ and any $d-1$ points of $P'$ has (projective) dimension $d$, i.e. these $d+1$ points are linearly independent.
\item
Find a large, preferably the largest, clique $C$ in the graph $\Gamma$. 
\item
Let $T$ be the subset of ${\mathbb F}_p^{n-k}$ which contains the zero vector and any vector which spans a one-dimensional subspace which is a projective point in $C$ and let $S$ be the abelian subgroup obtained from $\mathcal X'$. 
\item
Then $Q(S,T)$ is a $(\!(n,p^k|T|,d)\!)_p$ code.
\end{itemize} 

This generalises the method set out in \cite{YCO2007} which is a combinatorial interpretation of this method in the case $k=0$ and $p=2$. Note that if $k=0$ the graph $\Gamma$ will often be so large that finding a large clique $C$ will be hard. The advantage here is that we can choose $k$ large enough, so that the graph $\Gamma$, which has less than $p^{n-k}$ vertices, is small enough to allow clique finding algorithms to be implemented. The example in Section~\ref{secgraph} indicates that another trick is to restrict the vertices of $\Gamma$ to a well chosen subspace $\pi$, which has a small intersection with $R$ (or $R'$ if we choose $k>0$). This again reduces the size of the graph $\Gamma$ so that clique finding algorithms can be implemented.

\vspace{1cm}

   Simeon Ball\\
   Departament de Matem\`atiques, \\
Universitat Polit\`ecnica de Catalunya, \\
M\`odul C3, Campus Nord,\\
Carrer Jordi Girona 1-3,\\
08034 Barcelona, Spain \\
   {\tt simeon@ma4.upc.edu} \\

Pablo Puig\\
Facultat de Matem\`atiques, \\
Universitat Polit\`ecnica de Catalunya, \\
Carrer de Pau Gargallo, 14, \\
08028 Barcelona, Spain\\
  {\tt pablo.puig@estudiantat.upc.edu} \\

 \end{document}